\documentclass[11pt]{article}

\usepackage[margin=1in]{geometry}
\usepackage[T1]{fontenc}
\usepackage[utf8]{inputenc} % usually not needed on modern Overleaf, harmless
\usepackage{lmodern}
\usepackage{microtype}

\usepackage{amsmath, amssymb, amsthm, mathtools}
\usepackage{bm}        % bold math symbols
\usepackage{mathrsfs}  % script fonts
\usepackage{dsfont}    % indicator \mathds{1} if you want it
\usepackage{tikz}
\usetikzlibrary{arrows.meta,positioning}

\usepackage{enumitem}
\usepackage{booktabs}
\usepackage{array}
\usepackage{graphicx}
\usepackage{subcaption}
\usepackage{comment}

\usepackage{algorithm}
\usepackage{algpseudocode}

\usepackage[hidelinks]{hyperref}
\usepackage[nameinlink,capitalize,noabbrev]{cleveref}

\usepackage{xcolor}
\usepackage{todonotes}
\theoremstyle{plain}
\newtheorem{theorem}{Theorem}[section]
\newtheorem{lemma}[theorem]{Lemma}

\newtheorem{corollary}[theorem]{Corollary}

\theoremstyle{definition}
\newtheorem{definition}[theorem]{Definition}

\theoremstyle{remark}

\crefname{assumption}{Assumption}{Assumptions}

\title{A Tractable Class of Cooperative Games Defined by Directed Networks: Unanimity Decomposition and Shapley Value}
\author{ David Ryz\'{a}k\(^{\dagger}\)\and  Tomá\v{s} Kroupa\(^{\dagger}\)} 

\date{\(^{\dagger}\)Faculty of Electrical Engineering, Czech Technical University in Prague, Technická 2, Prague, 166 27, Czech Republic} % leave empty for conference style

\begin{document}
\maketitle

\begin{abstract}
We introduce a class of cooperative games induced by weighted directed graphs. Specifically, the coalitional value combines an internal interaction term given by the induced subgraph game with an external component based on minimal incoming edges from outside the coalition. The resulting game has a convenient representation in terms of unanimity games. This representation enables closed-form polynomial-time formulas for the Shapley and Banzhaf values. We further establish that the game has a nonempty core and is totally balanced. The class of such games therefore provides an analytically and computationally tractable example of structured  network-induced cooperative games in which stability-based allocations and fairness-based solution concepts do not coincide.
\end{abstract}

\noindent\textbf{Keywords:} cooperative game theory; graph games; directed networks; Shapley value; core

\begin{comment}

\subsubsection{What do we have so far}

\begin{itemize}

\item Mobius transform of the trust game. 
\item Shapley and Banzhaf polynomial formulas and their interpretation with respect to the newtwork.
\item Interpretation of shapley or banzhaf change with respect to single value $a_{ij}$ change.
\item  Core, Total balancedness, ordinal equivalence of Banzhaf and Shapley.
\item Brief monotonicity discussion.
\end{itemize}

\subsubsection{What do to with what we have so far}
\begin{itemize}
    \item Discuss general aggregation of  internal and external game such as 
    \begin{equation*}
        \alpha INT(S)+(1-\alpha)EXT(S), \ \alpha\in[0,1]
    \end{equation*}
    Solutions based on alpha: Results on Values and unanimity decomposition are simply generelizable for the lambda aggregation.
    
    \item vector valued trust, where current results are possibly the aggregation as above. Solutions in form: partial order of efficient solutions, lexiocographic order(at first sight it's not a good idea),  
    \item equilibria 
\end{itemize}

\subsubsection{Possible selling points}
\begin{itemize}
    \item Computational (closed form shapley, banzhaf)- We have 
     \item Aggregation mechanism using also the information from the outside of the coalition (Comment on why internal or external parts alone are not enough an why)
     \item Values (Banzhaf, Shapley) do not lose information about network but leverage it in specific way.
     On the other hand, core lose the network information but is stable.
\end{itemize}

\end{comment}
\section{Introduction}

We introduce a cooperative game induced by a weighted directed graph. The underlying network represents local evaluations between agents, which need not be symmetric. The coalitional worth combines two components: an internal term capturing interactions within coalition, and a non-additive external term determined by minimal incoming evaluations from outside the coalition.

The internal component alone is, in fact, an induced subgraph game whose Shapley value coincides with symmetric weighted degree centrality. In contrast, the external component acts as a bottleneck operator and, taken in isolation, neglects inner-coalitional interactions. Their sum produces a game that reflects both internal interaction and external exposure.

We analyze the resulting game through classical solution concepts. In particular, we derive explicit polynomial-time formulas for the Shapley and Banzhaf values, establish a unanimity decomposition of the game, and study properties including core characterization and total balancedness.

To justify the construction, it is important to observe that neither component alone yields a sufficiently rich class of games. The internal term defines an induced subgraph game whose Möbius transform contains only pairwise contributions. In particular, no higher-order coalition effects arise and the associated Shapley value reduces to symmetric weighted degree centrality.

%The external term, taken in isolation, defines a bottleneck game and its value depends only on minimal incoming edges from outside the coalition and therefore ignores effects among coalition members. In this case, coalition worth is determined entirely by boundary structure rather than internal interaction.
%This interaction produces nontrivial higher-order effects while preserving a tractable unanimity decomposition.

The external component takes the worth of a coalition based on edges from players outside the coalition, thus capturing exposure to external assessments rather than internal cooperation. Considered alone, this term assigns value independently of interactions among coalition members. Its combination with the internal interaction term generates nontrivial marginal effects while preserving an  unanimity decomposition.

A key structural property of the game is that its Möbius transform can be expressed in simple terms. The external component induces a chain of unanimity games whose supports form an inclusion chain determined by the ordering of incoming neighbors. This chain structure allows a tractable computation of classical solution concepts, e.g., Shapley and Banzhaf values.

Our main theoretical contributions are as follows.
First, we derive the Möbius transform of the game and obtain  polynomial-time formulas for the Shapley and Banzhaf values.
Second, we characterize the marginal impact of individual local evaluations on these value operators as a function of a player’s position in the network.
%\textcolor{red}{Third, we show that the Shapley and Banzhaf values are not ordinally equivalent within this class.}
Finally, we establish the structural properties of the game, including a characterization of the core and total balancedness.

In the following paragraph we briefly discuss related work. 
\begin{comment}
Then we define trust game and derive the Möbius transform for it. Further, we derive closed formula for the Shapley value of trust game and discuss the marginal contributions of weights on the Shapley value. Then we derive the same for Banzhaf value. At last we prove that trust game has singleton core and is totally balanced under the condition of weigths being nonegative.
\end{comment}
\paragraph{Related work}
Mathematically, our work is closely connected to studies on the complexity of cooperative games and games induced by a weighted graphs. The computational complexity of cooperative games is a crucial topic, since the cooperative game naturally has an exponential size in the number of players. Thus, a tractable computation of solution concepts is important. Chalkiadakis and Wooldridge in~\cite{chalkiadakis_weighted_2016} comment on Shapley value being $\sharp P$ in the context of weighted voting games, i.e., on a restricted class of games. The results of Deng and Papadmitriou in~\cite{deng_complexity_1994} shows that solution concepts such as the Shapley value can be computed in polynomial time in some cases, for instance, a weighted undirected graph. The work of Chalkiadakis, Elkind and Wooldridge~\cite{chalkiadakis_computational_2011} serves as a basic reference material on computational challenges in cooperative games, studying both computational complexity and the closely related topic of succinct game representation. Another succinct representations of cooperative games except of graph induced games are also MC-nets studied already by Ieong and Shoham~\cite{ieong_marginal_2005}. 
The graph based games are studied generally from several perspectives and we aim to comment on few of them. One of the first works of graphs studied via cooperative games is work~\cite{myerson_graphs_1977} with Myerson value as a flagship solution concept. 
Survey of Borm et al.~\cite{borm_operations_2001} studied several types of cooperative games induced by graphs in areas of operations research. 
Since our solution concepts are based on local graph relationships the topic is loosely connected to the degree centrality measures. The survey of Tarkowski et al.~\cite{tarkowski_game-theoretic_2017} summarize degree centrality approaches. More recently, Rusinowska and Van den Brink~\cite{brink_degree_2022,brink_degree_2024} study degree centrality from an axiomatic perspective in the context of graphs and digraphs.

The remainder of the paper is organized as follows. 
We briefly discuss related work in the following paragraph.
In Section~\ref{sec:model}, we introduce the trust game, establish its basic properties and derive the unanimity decomposition. In Sections~\ref{sec:values} and~\ref{sec:banz} we provide closed-form expressions for the Shapley and Banzhaf values and interpretation of their marginal effects in terms of the underlying weigthed graph and its local structure. 
Section~\ref{sec:core} investigates stability properties of the game, showing that the core is a singleton and establishing that the game is totally balanced. 
Finally, Section~\ref{sec:conclusion} concludes.

\section{Trust Game}\label{sec:model}

We start with the definition of the game which we also call trust game as in Bandhana et al.\cite{bandhana_trust_2024}
\begin{definition}[Trust Game]
Let $G = (N, E)$ be a weighted directed graph, where each edge $(i, j) \in E$ is assigned a weight $a_{ij} \in [0, 1]$.
Let $(N,v)$ be a \emph{trust game} given by 
$$v(S)=\underbrace{\sum_{\substack{i\in S,j\in S \\ (i,j)\in E}}a_{ij}}_{\text{internal}}+\underbrace{\sum_{i\in S^*}\min_{j\notin S:(j,i)\in E}a_{ji}}_{\text{external}},\ \text{for each}\ S\subseteq N $$ where $S^*=\{i\in S:$there exists $j\notin S$ such that $(j,i)\in E\}$.
\end{definition}
Bandhana et al.~\cite{bandhana_trust_2024} established some elementary properties of this game.
\begin{itemize}
    \item Trust game is \emph{superadditive}, i.e. for all disjoint coalitions $S,T \subseteq N$,
    \[
        v(S\cup T)\;\ge\; v(S)+v(T).
    \]
    \item Trust game is \emph{monotone}, i.e.  for all coalitions $S,T \subseteq N$ with $S\subseteq T$,
    \[
        v(S)\;\le\; v(T).
    \]
    \item There exists at least one allocation in the core of the trust game.
\end{itemize}
Next, we introduce the notation for the trust game, which we will use repeatedly in what follows.
For every player $i \in N$, we denote by
\begin{align*}
N^{-}(i) = \{\,i_1, i_2, \dots, i_{m_i}\,\}
\end{align*}
the set of players that have an incoming edges to $i$ (the \emph{in-neighbours} of $i$) and we have indexed them in nondecreasing order, i.e., $a_{i_1 i}\leq a_{i_2 i}\leq \ldots \leq a_{i_{m_i}i}$.

For the notational convenience and to stress the ranking among in-neigbours we use the following notation $b_i(t):=a_{i_t i}$, thus 
\begin{align*}
b_i(1) \le b_i(2) \le \dots \le b_i(m_i).
\end{align*}
We set $b_i(0):=0$ for convenience. %Therefore, the players in $N_i^-$ are ordered in such a way that $i_r$ has the $r$-th smallest incoming edge value to $i$.

\medskip
Let us now make more precise definition of the external game with respect to possibility of no in-neighbors for a given player $i$ in coalition $S$.
\begin{definition}[External game of the player $i$]
We denote by $w_i(S)$ the \emph{external game of player $i$} which is defined as follows:
\begin{align}
w_i(S):=\min_{j\notin S:(j,i)\in E}a_{ji} =
\begin{cases}
\displaystyle \min_{i_t \notin S} b_i(t), & \text{if } i \in S,\\[6pt]
0, & \text{if } i \notin S,
\end{cases}
\label{eq:def_vi}
\end{align}
where we used the convention that the minimum of the empty set of in-neighbors is $0$.  
\end{definition}
Next, we express $w_i$ as a linear combination of unanimity games. Recall that for any $T\subseteq N$, the unanimity game $u_T$ is defined by
\begin{align*}
u_T(S) =
\begin{cases}
1, & \text{if } T \subseteq S,\\
0, & \text{otherwise.}
\end{cases}
\end{align*}
Before moving onto the unanimity decomposition, we fix the notation for the coalitions of in-neighbors. For each $i\in N$, we denote by 
\begin{equation*}
    T_i(t) = \{\,i, i_1, i_2, \dots, i_{t-1}\,\}, \qquad t = 1,\dots, m_i+1,
\end{equation*}
the set of node $i$ and its first $t-1$ in-neighbors ordered by their nondecreasing weights.
The following lemma is about unanimity decomposition of the game using the Möbius transform~\cite{grabisch_aggregation_2009}.
\begin{lemma}[Unanimity decomposition of the external game of the player $i$]\label{lemma:ext_decomp}
Let
\begin{align*}
T_i(t) = \{\, i, i_1, i_2, \dots, i_{t-1} \,\}, 
\qquad t = 1,\dots, m_i+1,
\end{align*}
where $i_1, i_2, \dots, i_{t-1}$ denote the in-neighbors of player $i$, ordered in nondecreasing order. Let $u_T$ be a unanimity game on $T\subseteq N$.
Then for every coalition $S \subseteq N$,
\begin{align}
w_i(S)
= \sum_{t=1}^{m_i} [\big(b_i(t)-b_i(t-1)\big)\,u_{T_i(t)}(S)]
  - b_i(m_i)\,u_{T_i(m_i+1)}(S).
\label{eq:decomposition_external}
\end{align}
\end{lemma}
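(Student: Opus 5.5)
The plan is a direct pointwise verification: fix a coalition $S\subseteq N$ and compare both sides of \eqref{eq:decomposition_external}, splitting into cases according to how much of the chain $T_i(1)\subseteq T_i(2)\subseteq\dots\subseteq T_i(m_i+1)$ is contained in $S$. No earlier result is needed beyond the definition \eqref{eq:def_vi} of $w_i$ and the definition of $u_T$. The key observation is that the sets $T_i(t)$ are nested, so the indicator $u_{T_i(t)}(S)$ is nonincreasing in $t$. It therefore equals $1$ exactly for $t$ up to some threshold determined by $S$ and $0$ afterwards; the resulting sum will then telescope.

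\emph{Case 1: $i\notin S$.} Every $T_i(t)$ contains $i$, so all unanimity terms vanish. The right-hand side is then $0$, which equals $w_i(S)$ by \eqref{eq:def_vi}.

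\emph{Case 2: $i\in S$.} Let $k$ be the smallest index $t\in\{1,\dots,m_i\}$ with $i_t\notin S$, and set $k=m_i+1$ if all in-neighbours lie in $S$. Then $T_i(t)\subseteq S$ holds iff $i_1,\dots,i_{t-1}\in S$, that is, iff $t\le k$.
\begin{itemize}
\item If $k\le m_i$, the last term vanishes because $T_i(m_i+1)\not\subseteq S$. The sum telescopes to $\sum_{t=1}^{k}\big(b_i(t)-b_i(t-1)\big)=b_i(k)-b_i(0)=b_i(k)$.
\item If $k=m_i+1$, the sum telescopes to $b_i(m_i)$, and the last term subtracts $b_i(m_i)$, giving $0$.
\end{itemize}
This also covers the degenerate case $m_i=0$: the sum is empty and the remaining term is $-b_i(0)u_{\{i\}}(S)=0$.

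It remains to match these values with $w_i(S)$. When $k=m_i+1$, no in-neighbour lies outside $S$, so $w_i(S)=0$ by the empty-minimum convention. When $k\le m_i$, I would argue that $\min_{i_t\notin S}b_i(t)=b_i(k)$. Every missing in-neighbour $i_t$ has $t\ge k$, hence $b_i(t)\ge b_i(k)$ by the nondecreasing ordering, and $i_k$ itself is missing.

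The main (though mild) obstacle is this last identification in the presence of ties among the weights $a_{i_t i}$. Here I would note that the ordering of in-neighbours is fixed once and for all, and that the argument only uses monotonicity of $b_i$, never strict monotonicity. Hence ties do not affect the value of the minimum, only which of the tied indices attains it. With the two cases combined, \eqref{eq:decomposition_external} holds for every $S$.
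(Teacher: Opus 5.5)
Your proof is correct and follows essentially the same route as the paper. You fix $S$, take the first in-neighbour index $k$ missing from $S$, telescope the nested unanimity terms up to $b_i(k)$, and let the final term cancel $b_i(m_i)$ when all in-neighbours lie in $S$. Your version is slightly more complete than the paper's, since it explicitly treats $i\notin S$, $m_i=0$, and ties, which the paper passes over.
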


\begin{proof}
Consider a coalition $S\ni i$ and let
\begin{align*}
r = \min\{\,t : i_t \notin S\,\}.
\end{align*}

Then, from Equation~\eqref{eq:def_vi},
\[
w_i(S) =
\begin{cases}
b_i(r), & r \le m_i,\\
0, & N_i^- \subseteq S.
\end{cases}
\]
That is, $w_i(S)=0$ in case there is no in-neighbor of $i$ outside the coalition $S$. 
We connect the minimum operator to the unanimity games. For each $t \le m_i$ and a given $S\subseteq N$, the following equivalence holds \begin{equation*}
   w_i(S) = b_i(t)
\iff i,i_1,\dots,i_{t-1}\in S,\  \text{i.e.,}\ u_{T_i(k)}(S)=1 \ \forall k\in \{1,2,\ldots,t\}.  
\end{equation*}
Hence we can rewrite the minimum operator as a telescopic sum of unanimity games as follows:
\[
\sum_{t=1}^{m_i} \big(b_i(t)-b_i(t-1)\big) u_{T_i(t)}(S)
=
\begin{cases}
b_i(r), & r \le m_i,\\
b_i(m_i), & N_i^- \subseteq S.
\end{cases}
\]
However, $w_i(S)=0$ if $N_i^- \subseteq S$. By subtracting the term $b_i(m_i) u_{T_i(m_i+1)}(S)$ we obtain~\eqref{eq:decomposition_external}.

\begin{comment}
\[
\sum_{t=1}^{m_i} \big[(b_i(t)-b_i(t-1)\big) u_{T_i(t)}(S)]-b_i(m_i) u_{T_i(m_i+1)}(S)
=
\begin{cases}
b_i(r), & r \le m_i,\\
0, & N_i^- \subseteq S,
\end{cases}
\]
we have the unanimity game decomposition of $w_i(S)$.
\end{comment}

\end{proof}

\section{Shapley Value}\label{sec:values}
We first analyze how the unanimity decomposition of the external game $\sum_{i\in N}w_i$ determines the Shapley value of this part of the game, as its structure is less transparent than that of the internal component.
We denote by $r_j(i)$ the rank of player $i$ among the in-neighbors of $j$, that is,
\[
b_j\big(r_j(i)\big) = a_{ij}.
\]
Whenever no ambiguity arises, we write $r_j$ instead of $r_j(i)$, since only the rank of player $i$ appears explicitly in the formula for their Shapley value.

\begin{theorem}[Shapley value of the external game]
The Shapley value of the external game $\sum_{i\in N} w_i$ is as follows:
     \begin{align}\label{thm:shapley_external_thm}
 \phi_{i}\left(\sum_{j \in N}w_j\right)
= \sum_{t=1}^{m_i} \frac{b_i(t)-b_i(t-1)}{t}
  - \frac{b_i(m_i)}{m_i+1}+\sum_{j\neq i:(i,j)\in E}\ \sum_{t=r_j+1}^{m_i} \frac{b_j(t)-b_j(t-1)}{t}
  - \frac{b_j(m_j)}{m_j+1}.
\end{align}

\end{theorem}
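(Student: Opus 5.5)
The plan is to combine linearity of the Shapley value with the unanimity decomposition of Lemma~\ref{lemma:ext_decomp}. Since $\phi$ is linear, $\phi_i\bigl(\sum_{j\in N} w_j\bigr)=\sum_{j\in N}\phi_i(w_j)$. By \eqref{eq:decomposition_external}, each $w_j$ is a linear combination of the unanimity games $u_{T_j(1)},\dots,u_{T_j(m_j+1)}$. So it suffices to know $\phi_i(u_T)$ for these games. I will use the standard fact that $\phi_i(u_T)=1/|T|$ if $i\in T$ and $0$ otherwise, which follows from symmetry, efficiency and the null-player property. The supports form the chain $T_j(1)\subset T_j(2)\subset\dots\subset T_j(m_j+1)$, and $|T_j(t)|=t$ because $T_j(t)$ consists of $j$ together with its first $t-1$ in-neighbours. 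Hence every unanimity coefficient in \eqref{eq:decomposition_external} gets divided by $t$, which gives the $1/t$ and $1/(m_j+1)$ denominators in the claimed formula.

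Next I split the sum over $j$ into the term $j=i$ and the terms $j\neq i$.

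For $j=i$: the player $i$ lies in every $T_i(t)$. Therefore
$$\phi_i(w_i)=\sum_{t=1}^{m_i}\frac{b_i(t)-b_i(t-1)}{t}-\frac{b_i(m_i)}{m_i+1},$$
which is the first part of the formula.

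For $j\neq i$ there are two cases.
\begin{itemize}
\item If $i$ is not an in-neighbour of $j$, i.e.\ $(i,j)\notin E$, then $i\notin T_j(t)$ for every $t$. So $\phi_i(w_j)=0$, which explains why the outer sum ranges only over $j$ with $(i,j)\in E$.
\item If $(i,j)\in E$, then $i=j_{r_j}$ for the fixed ordering of $N^-(j)$. Thus $i\in T_j(t)=\{j,j_1,\dots,j_{t-1}\}$ exactly when $t-1\ge r_j$, i.e.\ $t\ge r_j+1$. Moreover $i\in T_j(m_j+1)$ always. This yields
$$\phi_i(w_j)=\sum_{t=r_j+1}^{m_j}\frac{b_j(t)-b_j(t-1)}{t}-\frac{b_j(m_j)}{m_j+1}.$$
\end{itemize}
Summing over such $j$ gives the second part.

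I expect the only delicate step to be this index bookkeeping, and I would state three clarifications explicitly.
\begin{itemize}
\item The upper limit of the inner sum over $t$ should be $m_j$, not $m_i$; the displayed statement appears to contain a typo.
\item The term $-b_j(m_j)/(m_j+1)$ belongs inside the sum over $j$.
\item When weights are tied, the rank $r_j(i)$ is defined with respect to the fixed ordering used to build the chain $T_j(t)$. The lemma holds for any tie-breaking, and the formula is consistent as long as the same ordering is used.
\end{itemize}
Finally, I would check the degenerate case $m_j=0$. There $w_j\equiv 0$, and the decomposition reduces to $-0\cdot u_{\{j\}}$, so the formula correctly contributes nothing.
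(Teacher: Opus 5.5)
Your proposal is correct and follows the paper's proof essentially step by step. The paper also combines linearity of $\phi$, the chain decomposition of Lemma~\ref{lemma:ext_decomp}, $\phi_i(u_T)=1/|T|$ for $i\in T$, and the split into $\phi_i(w_i)$ and $\phi_i(w_j)$ for $(i,j)\in E$ with $i\in T_j(t)$ iff $t\ge r_j+1$; your corrections (upper limit $m_j$ instead of $m_i$, and $-b_j(m_j)/(m_j+1)$ inside the sum over $j$) are right and fix typos that also appear in the paper's own display for $\phi_i(w_j)$.
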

\begin{proof}
The Shapley operator $\phi$ is linear.
Using Lemma~\ref{lemma:ext_decomp} yields
\begin{align}
\phi_i(w_k)
= \sum_{t=1}^{m_i} \big(b_i(t)-b_i(t-1)\big)\,\phi_i(u_{T_i(t)})
  - b_i(m_i)\,\phi_i(u_{T_i(m_i+1)}),\ k \in N.
\label{eq:shapley_in_thm}
\end{align}

For unanimity games, the Shapley value has the following form
\begin{align*}
\phi_i(u_T) =
\begin{cases}
\dfrac{1}{|T|}, & \text{if } i\in T,\\[4pt]
0, & \text{if } i \notin T.
\end{cases}
\end{align*}
Let us now discuss for which $k \in N$ is~\eqref{eq:shapley_in_thm} equal to $0$. If player $i$ is not among the in-neighbors of $k$ then all the terms in the sum of~\eqref{eq:shapley_in_thm} will be  $0$. Thus, we only need to calculate the sum for the players for which $(i,j)\in E$ and for the player $i$.

 Let us begin by computing $\phi_i(w_i)$. Since \(i \in T_i(t)\) for all \(t\), we have
\begin{align}
\phi_i(w_i)
= \sum_{t=1}^{m_i} \frac{b_i(t)-b_i(t-1)}{t}
  - \frac{b_i(m_i)}{m_i+1}.
\label{eq:phi_i}
\end{align}

When $i$ is an in-neighbour of $j$, the expression $\phi_i(w_j)$  is calculated as follows
\begin{align}
\phi_{i}(w_j)
= \sum_{t=r_j+1}^{m_i} \frac{b_j(t)-b_j(t-1)}{t}
  - \frac{b_j(m_j)}{m_j+1}.
\label{eq:phi_ir}
\end{align}
If $i$ is not in-neighbor of $j$ then $\phi_i(w_j)=0$.

Therefore, the Shapley value~\eqref{thm:shapley_external_thm} can be expressed using~\eqref{eq:phi_i} and~\eqref{eq:phi_ir}.
\begin{comment}
  \begin{align}
 \phi_{i}\left(\sum_{j \in N}w_j\right)
= \sum_{t=1}^{m_i} \frac{b_i(t)-b_i(t-1)}{t}
  - \frac{b_i(m_i)}{m_i+1}+\sum_{j\neq i:(i,j)\in E}\ \sum_{t=r_j+1}^{m_i} \frac{b_j(t)-b_j(t-1)}{t}
  - \frac{b_j(m_j)}{m_j+1}.
\label{eq:phi_whole}
\end{align}
\end{comment}

\end{proof}

\begin{comment}
\paragraph{Interpretation.}
Each increment \(\Delta_i(t) = b_i(t)-b_i(t-1)\)
represents the increase in the minimal external trust toward \(i\)
once all weaker--trusting neighbours
\(i_1,\dots,i_{t-1}\) have joined the coalition.
In the unanimity component \(u_{T_i(t)}\),
this increment \(\Delta_i(t)\) is shared equally among all players in \(T_i(t)\).
The final correction term
\(-b_i(m_i)u_{T_i(m_i+1)}\)
ensures that \(v_i\) becomes zero
once \(i\) and all its in--neighbours are in the coalition.
\end{comment}

\begin{lemma}[The unanimity decomposition of the trust game]
    
Let
\begin{align*}
T_i(t) = \{\, i, i_1, i_2, \dots, i_{t-1} \,\}, 
\qquad t = 1,\dots, m_i+1,
\end{align*}
where $i_1, i_2, \dots, i_{t-1}$ denote the in-neighbors of player $i$, ordered in nondecreasing order. Let $u_T$ be a unanimity game on $T\subseteq N$.
Then for every coalition $S \subseteq N$,
\begin{align}
v(S)
=\sum_{\substack{i\in S,j\in S \\ (i,j)\in S}}a_{ij}u_{\{ij\}}(S)+ \sum_{i\in S}\sum_{t=1}^{m_i} \big(\big(b_i(t)-b_i(t-1)\big)\,u_{T_i(t)}(S)\big)
  - b_i(m_i)\,u_{T_i(m_i+1)}(S).
\label{eq:decomposition}
\end{align}
\end{lemma}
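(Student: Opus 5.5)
The plan is to split $v$ into its internal and external parts and decompose each separately. The external part is handled by summing Lemma~\ref{lemma:ext_decomp} over all players. We then check that the two resulting expansions add up to the stated formula, reading the outer sums (written as over $i\in S$ or $(i,j)\in S$) as sums over all players $i\in N$ and all edges $(i,j)\in E$. The indicators $u_T(S)$ already enforce the membership constraints.

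\textbf{Step 1 (external part).} For every $S\subseteq N$, the external term of $v(S)$ equals $\sum_{i\in N} w_i(S)$. If $i\in S^*$, the summand $\min_{j\notin S:(j,i)\in E}a_{ji}$ is exactly $w_i(S)$ by \eqref{eq:def_vi}. If $i\in S\setminus S^*$, then $i$ has no in-neighbour outside $S$, and $w_i(S)=0$ by the empty-minimum convention. If $i\notin S$, then $w_i(S)=0$ by definition. Applying Lemma~\ref{lemma:ext_decomp} to each $w_i$ and summing over $i\in N$ gives the double sum over $t$ together with the correction terms $-b_i(m_i)u_{T_i(m_i+1)}(S)$. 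Every $T_i(t)$ contains $i$, so all these terms vanish when $i\notin S$. This justifies restricting the outer index to $i\in S$, consistent with how \eqref{eq:decomposition} is written.

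\textbf{Step 2 (internal part).} For each edge $(i,j)\in E$, we have $u_{\{i,j\}}(S)=1$ exactly when $i,j\in S$. Therefore $\sum_{(i,j)\in E}a_{ij}u_{\{i,j\}}(S)=\sum_{i,j\in S,(i,j)\in E}a_{ij}$, which is the internal term.

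\textbf{Step 3.} Adding Steps~1 and~2 yields \eqref{eq:decomposition} for every $S$.

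The only delicate points are bookkeeping rather than mathematics. First, the case $i\in S\setminus S^*$ must be matched with the convention in \eqref{eq:def_vi}; this includes players with $m_i=0$, for whom the expansion degenerates to $0\cdot u_{\{i\}}$. Second, the index ranges in the displayed sums should be interpreted as above, so that the unanimity coefficients do not depend on $S$ and the result is a genuine Möbius decomposition. I would state explicitly that the coefficient of $u_{\{i,j\}}$ is $a_{ij}+a_{ji}$ when both edges exist, and that terms attached to the same set $T$ arising from different sources may be merged. Nothing beyond linearity and Lemma~\ref{lemma:ext_decomp} is needed.
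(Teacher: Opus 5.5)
Your proof is correct and takes the same route as the paper: the internal term is rewritten as $\sum_{(i,j)\in E}a_{ij}u_{\{i,j\}}(S)$, and the external term is handled by summing Lemma~\ref{lemma:ext_decomp} over players. Your additional bookkeeping supplies details that the paper's two-line proof leaves implicit: matching the case $i\in S\setminus S^*$ (including $m_i=0$) with the empty-minimum convention, and reading the index sets as $N$ and $E$ so that the coefficients do not depend on $S$.
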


\begin{proof}
The first two sums follow from rewriting the total weight of coalition $S$ as the sum of weights over all pairs of players within $S$. 
The second term follows from Lemma~\ref{lemma:ext_decomp}.
\end{proof}

\begin{corollary}[Shapley value of the trust game]
Let $(N,v)$ be  a trust game. The Shapley value of a player $i$ in a trust game $v$ is:
     \begin{align}
 \phi_{i}(v)
= \sum_{\substack{j\in N \\ (i,j)\in E,(j,i)\in E}}\dfrac{a_{ij}+a_{ji}}{2}+&\sum_{t=1}^{m_i} \frac{b_i(t)-b_i(t-1)}{t}
  - \frac{b_i(m_i)}{m_i+1}  \\
  +&\sum_{j\neq i:(i,j)\in E}\ \sum_{t=r_j+1}^{m_i} \frac{b_j(t)-b_j(t-1)}{t}
  - \frac{b_j(m_j)}{m_j+1}.
\end{align}
\end{corollary}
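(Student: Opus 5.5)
The plan is to use linearity of the Shapley operator on the unanimity decomposition of the trust game (the preceding lemma). We write $v = v_{\mathrm{int}} + \sum_{j\in N} w_j$, where $v_{\mathrm{int}}(S)=\sum_{i,j\in S,(i,j)\in E} a_{ij}$ is the internal (induced subgraph) part. Then
\[
\phi_i(v)=\phi_i(v_{\mathrm{int}})+\phi_i\Big(\sum_{j\in N} w_j\Big).
\]
The second summand is exactly the Shapley value of the external game computed in the preceding theorem. This yields the last two groups of terms of the corollary: the contribution $\phi_i(w_i)$ from $i$'s own external game, and the contributions $\phi_i(w_j)$ from the external games of the out-neighbours $j$ of $i$, where $i$ sits at rank $r_j(i)$. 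So the only new computation is $\phi_i(v_{\mathrm{int}})$.

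For the internal part, we use the first sum in the decomposition of the trust game: $v_{\mathrm{int}}=\sum_{(k,l)\in E} a_{kl}\,u_{\{k,l\}}$, with loops excluded, so each $\{k,l\}$ has two elements. Since $\phi_i(u_{\{k,l\}})=\tfrac12$ if $i\in\{k,l\}$ and $0$ otherwise, we get
\[
\phi_i(v_{\mathrm{int}})=\sum_{j:(i,j)\in E}\frac{a_{ij}}{2}+\sum_{j:(j,i)\in E}\frac{a_{ji}}{2}.
\]
When both $(i,j)$ and $(j,i)$ are edges, this groups as $\tfrac{a_{ij}+a_{ji}}{2}$, which is the first term of the statement.

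The main obstacle is bookkeeping rather than mathematics. The statement sums only over $j$ with both $(i,j)$ and $(j,i)$ in $E$, while the computation above also produces one-directional terms $a_{ij}/2$ or $a_{ji}/2$. I would therefore adopt the convention $a_{kl}=0$ for $(k,l)\notin E$, under which the sum extends harmlessly to all $j$ adjacent to $i$ in either direction. Equivalently, I would read the index set as ``$(i,j)\in E$ or $(j,i)\in E$'' and note this explicitly. Similarly, in the out-neighbour terms I would check the index bounds against $\phi_i(w_j)=\sum_{t=r_j+1}^{m_j}\frac{b_j(t)-b_j(t-1)}{t}-\frac{b_j(m_j)}{m_j+1}$. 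The upper limit should be $m_j$ rather than $m_i$, and $i\in T_j(t)$ exactly for $t\ge r_j+1$. I would state this reading as inherited from the theorem and then simply add the two pieces.
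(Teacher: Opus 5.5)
Your proof is correct and follows the same route as the paper: linearity of $\phi$ on the unanimity decomposition, with the external part taken from the preceding theorem and the internal part obtained from $\phi_i(u_{\{k,l\}})=\tfrac12$. Both of your bookkeeping corrections are needed. The literal ``and'' index set drops one-directional edges: for a single edge $(1,2)$ it gives $\phi_1=-a_{12}/2$, whereas the true value is $0$. The inner upper limit must also be $m_j$, not $m_i$.
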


In the following lemma, we summarize marginal effects of weight $a_{ij}$ on the Shapley value and distinguish three cases according to the relative network positions of the players.
Let us begin by defining the marginal effect.

\begin{definition}[Marginal effect]
Let $(N,v)$ be a trust game and $G=(N,E)$ the corresponding weighted graph. Let $(i,j)\in E$ and $G_\varepsilon$ be the weighted graph identical to $G$ except
that the weight of edge $(i,j)$ is replaced by $a_{ij}+\varepsilon$ where $\varepsilon\in \mathbb{R}$ is small enough such that all rankings of in-neighbors are the same in $G$ and $G_\varepsilon$. Trust game corresponding to $G_\varepsilon$ is denoted by $(N,v_\varepsilon)$.
The \emph{marginal effect} of $a_{ij}$ on the Shapley value of
player $k$ is
\[
    \delta_{ij}^{k}(\varepsilon)
    \;:=\;
    \phi_k(v_\varepsilon)-\phi_k(v).
\]
\end{definition}

Thanks to linearity of the Shapley value we may discuss effects for internal and external game alone. The effect on the Shapley value of the trust game is just their sum.

\begin{lemma}[Effects of weight on Shapley value of player $i$]

 Marginal effects of the local value $a_{kj}, \ k,j \in N,\ k\neq j$ to the player's $i$ Shapley value:
\begin{itemize}
    \item If $(j,i)\in E$ then marginal effect of $a_{ji}$ is $\dfrac{1}{2}a_{ji}$ from internal game and $\dfrac{1}{r_i(j)(r_i(j)+1)}a_{ji}$ from the external game,
    \item If $(i,j)\in E$ then marginal effect of $a_{ij}$ is $\dfrac{1}{2}a_{ij}$ from internal game and $-\dfrac{1}{(r_j(i)+1)}a_{ij}$ from the external game,
    \item If $(i,j)\in E$ and $(k,j)\in E$ and $r_j(i)<r_j(k)$  then marginal effect of $a_{kj}$ is 0 from internal game and $\dfrac{1}{r_j(k)(r_j(k)+1)}a_{kj}$ from the external game.
\end{itemize}
\end{lemma}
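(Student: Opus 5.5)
The plan is to use linearity of the Shapley value together with the unanimity decomposition of the trust game from the previous lemma. Under the perturbation $a_{kj}\mapsto a_{kj}+\varepsilon$ the rankings of in-neighbours do not change. Hence the supports $T_\ell(t)$ of all unanimity games stay the same, and only the Möbius coefficients move. Each of these coefficients is an affine function of the weights. So $\delta^{i}_{kj}(\varepsilon)$ equals $\varepsilon$ times the partial derivative of the closed-form Shapley value in the Corollary with respect to $a_{kj}$. I read the statement's ``marginal effect $c\,a_{kj}$'' as meaning that the perturbation of $a_{kj}$ is multiplied by the coefficient $c$. The task therefore reduces to reading off, for each case, the coefficient of the relevant $b_\ell(r)$ in the Corollary's formula.

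\emph{Internal part.} The internal game is $\sum_{(k,j)\in E}a_{kj}u_{\{k,j\}}$. Since $\phi_i(u_{\{k,j\}})=\tfrac12$ if $i\in\{k,j\}$ and $0$ otherwise, perturbing $a_{ji}$ or $a_{ij}$ changes $\phi_i$ by $\varepsilon/2$. Perturbing $a_{kj}$ with $i\notin\{k,j\}$ does not change $\phi_i$. This settles the internal contributions in all three bullets.

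\emph{External part.} I use \eqref{eq:phi_i} and \eqref{eq:phi_ir} and track where a given $b_\ell(s)$ appears. The weight $b_\ell(s)$ enters through the increment $\Delta_\ell(s)=b_\ell(s)-b_\ell(s-1)$ with sign $+$ and through $\Delta_\ell(s+1)$ with sign $-$. When $s=m_\ell$, the second appearance is replaced by the correction term $-b_\ell(m_\ell)u_{T_\ell(m_\ell+1)}$.
\begin{itemize}
\item[(i)] For $a_{ji}=b_i(r)$ with $r=r_i(j)$, formula \eqref{eq:phi_i} gives coefficient $\frac1r-\frac1{r+1}=\frac1{r(r+1)}$. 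The boundary case $r=m_i$ gives $\frac1{m_i}-\frac1{m_i+1}$, which is the same value.
\item[(ii)] For $a_{ij}=b_j(r)$ with $r=r_j(i)$, player $i$ belongs to $T_j(t)$ only for $t\ge r+1$. So in \eqref{eq:phi_ir} only the negative appearance survives, namely $-\frac1{r+1}$, either through $\Delta_j(r+1)$ or, if $r=m_j$, through $-b_j(m_j)/(m_j+1)$.
\item[(iii)] For $a_{kj}=b_j(s)$ with $s=r_j(k)>r_j(i)$, player $i$ lies in both $T_j(s)$ and $T_j(s+1)$ (or in $T_j(m_j+1)$). Both appearances survive, giving $\frac1s-\frac1{s+1}=\frac1{s(s+1)}$.
\end{itemize}

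The main obstacle is bookkeeping rather than conceptual. Two points need care. The first is the boundary case $s=m_\ell$, where the subtracted correction term replaces the next increment; I would check it separately each time. The second is the summation range in the printed formula \eqref{eq:phi_ir}: its upper limit should be $m_j$, not $m_i$, and I would state and use that corrected range. I would also remark on ties: the hypothesis that $\varepsilon$ preserves the rankings makes the derivative well defined.
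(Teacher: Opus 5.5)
Your proposal is correct and takes the paper's route: the paper's proof is just ``compute directly from the Shapley value,'' meaning read off the coefficient of each weight in the closed-form value obtained from the unanimity decomposition, which is what you do. Your extra care is warranted on three points. The upper limit in \eqref{eq:phi_ir} should indeed be $m_j$. The boundary case $r=m_\ell$ must be checked against the correction term. Taking the internal coefficient $\tfrac12$ directly from $\sum_{(k,j)\in E}a_{kj}u_{\{k,j\}}$ avoids the Corollary's printed internal sum, which counts only reciprocated edges and would wrongly give $0$ for a one-way edge.
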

\begin{proof}
 Compute directly from the Shapley value.
\end{proof}

\paragraph{Interpretation of Shapley value}
Viewing the game through marginal effects makes the interpretation of the Shapley value in the trust game easier to understand. In particular, the Shapley value of an agent $i$ decomposes into three classes of marginal effects:
\begin{enumerate}
    \item \textbf{Self-contribution:} the effect of $i$'s own participation on $i$'s value.
    \item \textbf{Incoming-neighbor effects:} effects associated with agents $j$ such that $(j,i)\in E$.
    \item \textbf{"Shared-neighbor" effects:} effects associated with agents $k$ for which there exists some $j$ with $(i,j)\in E$ and $(k,j)\in E$.
\end{enumerate}
Equivalently, only trust information that is local to player $i$, edges incident to $i$ (in either direction) and edges from other agents to the out-neighbors of $i$ can affect the Shapley value of the player $i$.

We consider a change in a single weight $a_{ij}$ while keeping all other weights fixed. We then distinguish three cases according to the network position of the player whose Shapley value is affected.

The graph in Figure~\ref{fig:discrete_graph} illustrates this particular setting. Players whose Shapley values change are highlighted as colored vertices.
\begin{figure}[ht]
\centering
\begin{tikzpicture}[
  v/.style={circle,draw,minimum size=20pt,inner sep=1pt},
  e/.style={-{Latex[length=3mm,width=2mm]}, line width=1pt}
]

% --- vertices ---
\node[v, fill=blue!25]  (i)  at (-2.5,0) {$i$};
\node[v, fill=green!25] (j)  at ( 2.5,0) {$j$};

\node[v]               (k1) at (4.8, 1.8) {$k_1$};
\node[v, fill=red!25]  (k2) at (5.8, 0.6) {$k_2$};
\node[v]               (k3) at (5.8,-0.6) {$k_3$};

\node[v] (k4) at (-4.8, 1.2) {$k_4$};
\node[v] (k5) at (-4.8,-1.2) {$k_5$};

% --- directed edges: ij, ji, k1j, k2j, k3j, k4i, k5i ---

% ij and ji (separate by bending so both arrows are visible)
\draw[e, bend left=20] (i) to (j); % ij
\draw[e, bend left=20] (j) to (i); % ji

% k_1j, k_2j, k_3j
\draw[e] (k1) to (j);  % k1j
\draw[e] (k2) to (j);  % k2j
\draw[e] (k3) to (j);  % k3j

% k_4i, k_5i
\draw[e] (k4) to (i);  % k4i
\draw[e] (k5) to (i);  % k5i

\end{tikzpicture}
\caption{Graph example for marginal effects.}
\label{fig:discrete_graph}
\end{figure}
In Figures~\ref{fig:marg_graph_i}, \ref{fig:shapley_j}, and \ref{fig:k_2shapley}, we omit explicit numerical values on the vertical axis, as they do not add greater insight in this illustrative setting. Instead, we display the relevant edge weights on the horizontal axis to highlight the dependence of the Shapley value on their change.
\begin{enumerate}
    \item In Figure~\ref{fig:marg_graph_i} the effect of the player's $i$ weight $a_{ij}$ to his Shapley value is shown. 
    \begin{figure}[ht]
    \centering
\begin{tikzpicture}[
  >=Latex,
  axis/.style={->, line width=0.9pt},
  bp/.style={dash pattern=on 3pt off 3pt, line width=0.9pt},
  hp/.style={dash pattern=on 3pt off 3pt, line width=0.8pt},
  tick/.style={line width=0.8pt},
  curve/.style={line width=1.2pt, blue!70},
  lab/.style={font=\small}
]

\def\Xscale{6}
\def\S{6}  % keep spacing
\def\Ytop{4.8} % ≈ 2/3 of previous 7.2

\def\xa{0.2}
\def\xb{0.5}
\def\xc{0.8}

\pgfmathsetmacro{\Xa}{\Xscale*\xa}
\pgfmathsetmacro{\Xb}{\Xscale*\xb}
\pgfmathsetmacro{\Xc}{\Xscale*\xc}
\pgfmathsetmacro{\Xone}{\Xscale*1.0}

% symbolic heights
\pgfmathsetmacro{\yA}{0}
\pgfmathsetmacro{\yBminus}{\S*(1/6*\xb)}
\pgfmathsetmacro{\yBplus}{\S*(1/4*\xb + 1/12)}
\pgfmathsetmacro{\yCminus}{\S*(1/4*\xc + 1/12)}
\pgfmathsetmacro{\yCplus}{\S*(3/10*\xc + (4/5)*(1/4))}
\pgfmathsetmacro{\yOne}{\S*(3/10*1 + (4/5)*(1/4))}

\begin{scope}[shift={(2.2,0)}]

% Shorter axes
\draw[axis] (0,0) -- (6.2,0) node[lab, right] {$a_{ij}$};
\draw[axis] (0,0) -- (0,\Ytop) node[lab, above, text=blue!75] {$\phi_i$};

% x ticks
\draw[tick] (0,0) -- (0,-0.10);
\node[lab, below] at (0,-0.12) {$0$};

\draw[tick] (\Xone,0) -- (\Xone,-0.10);
\node[lab, below] at (\Xone,-0.12) {$1$};

\draw[tick] (\Xa,0) -- (\Xa,-0.10);
\node[lab, below=6pt] at (\Xa,-0.12) {$a_{k_1 j}=0.2$};

\draw[tick] (\Xb,0) -- (\Xb,-0.10);
\node[lab, below=6pt] at (\Xb,-0.12) {$a_{k_2 j}=0.5$};

\draw[tick] (\Xc,0) -- (\Xc,-0.10);
\node[lab, below=6pt] at (\Xc,-0.12) {$a_{k_3 j}=0.8$};

% shorter dashed vertical guides
\draw[bp] (\Xa,0) -- (\Xa,\Ytop);
\draw[bp] (\Xb,0) -- (\Xb,\Ytop);
\draw[bp] (\Xc,0) -- (\Xc,\Ytop);

% y levels
\foreach \y/\label in {
    \yA/P_0,
    \yBminus/P_1,
    \yBplus/P_2,
    \yCminus/P_3,
    \yCplus/P_4,
    \yOne/P_5
}{
    \draw[tick] (0,\y) -- (-0.12,\y);
    \node[lab, left] at (-0.15,\y) {$\label$};
    \draw[hp] (0,\y) -- (\Xone,\y);
}

% curve
\draw[curve] (0,\yA) -- (\Xa,\yA);
\draw[curve] (\Xa,\yA) -- (\Xb,\yBminus);
\draw[curve] (\Xb,\yBplus) -- (\Xc,\yCminus);
\draw[curve] (\Xc,\yCplus) -- (\Xone,\yOne);

\end{scope}
\end{tikzpicture}
  \caption{Marginal effects on the Shapley value for player $i$ with other incoming links.}
    \label{fig:marg_graph_i}
\end{figure}
 \item In Figure~\ref{fig:shapley_j} the effect of the player's $i$ weight $a_{ij}$ to Shapley value of his out-neighbor $j$ is shown.

\begin{figure}[ht]
    \centering
\begin{tikzpicture}[
  >=Latex,
  axis/.style={->, line width=0.9pt},
  bp/.style={dash pattern=on 3pt off 3pt, line width=0.9pt},
  hp/.style={dash pattern=on 3pt off 3pt, line width=0.8pt},
  tick/.style={line width=0.8pt},
  curve/.style={line width=1.2pt, green!70},
  lab/.style={font=\small}
]

\def\Xscale{6}
\def\S{6}  % vertical scaling

% Continuous increments over Q0:
% inc(0.2)=1/5
% inc(0.5)=2/5
% inc(0.8)=23/40
% inc(1)=123/200

\pgfmathsetmacro{\yQzero}{0}
\pgfmathsetmacro{\yQone}{\S*(1/5)}
\pgfmathsetmacro{\yQtwo}{\S*(2/5)}
\pgfmathsetmacro{\yQthree}{\S*(23/40)}
\pgfmathsetmacro{\yQfour}{\S*(123/200)}

\pgfmathsetmacro{\Ytop}{\yQfour + 0.8}

% breakpoints
\def\xa{0.2}
\def\xb{0.5}
\def\xc{0.8}

\pgfmathsetmacro{\Xa}{\Xscale*\xa}
\pgfmathsetmacro{\Xb}{\Xscale*\xb}
\pgfmathsetmacro{\Xc}{\Xscale*\xc}
\pgfmathsetmacro{\Xone}{\Xscale*1.0}

\begin{scope}[shift={(2.2,0)}]

% Axes
\draw[axis] (0,0) -- (6.2,0) node[lab, right] {$a_{ij}$};
\draw[axis] (0,0) -- (0,\Ytop) node[lab, above, text=green!60!black] {$\phi_j$};

% x ticks
\draw[tick] (0,0) -- (0,-0.10);
\node[lab, below] at (0,-0.12) {$0$};

\draw[tick] (\Xone,0) -- (\Xone,-0.10);
\node[lab, below] at (\Xone,-0.12) {$1$};

\draw[tick] (\Xa,0) -- (\Xa,-0.10);
\node[lab, below=6pt] at (\Xa,-0.12) {$a_{k_1 j}=0.2$};

\draw[tick] (\Xb,0) -- (\Xb,-0.10);
\node[lab, below=6pt] at (\Xb,-0.12) {$a_{k_2 j}=0.5$};

\draw[tick] (\Xc,0) -- (\Xc,-0.10);
\node[lab, below=6pt] at (\Xc,-0.12) {$a_{k_3 j}=0.8$};

% dashed vertical guides
\draw[bp] (\Xa,0) -- (\Xa,\Ytop);
\draw[bp] (\Xb,0) -- (\Xb,\Ytop);
\draw[bp] (\Xc,0) -- (\Xc,\Ytop);

% ---- Y levels ----

% Q0
\draw[tick] (0,\yQzero) -- (-0.12,\yQzero);
\node[lab, left] at (-0.15,\yQzero) {$Q_0$};
\draw[hp] (0,\yQzero) -- (\Xone,\yQzero);

% Q1
\draw[tick] (0,\yQone) -- (-0.12,\yQone);
\node[lab, left] at (-0.15,\yQone) {$Q_1$};
\draw[hp] (0,\yQone) -- (\Xone,\yQone);

% Q2
\draw[tick] (0,\yQtwo) -- (-0.12,\yQtwo);
\node[lab, left] at (-0.15,\yQtwo) {$Q_2$};
\draw[hp] (0,\yQtwo) -- (\Xone,\yQtwo);

% Q3
\draw[tick] (0,\yQthree) -- (-0.12,\yQthree);
\node[lab, left] at (-0.15,\yQthree) {$Q_3$};
\draw[hp] (0,\yQthree) -- (\Xone,\yQthree);

% Q4 (label slightly lifted to avoid overlap)
\draw[tick] (0,\yQfour) -- (-0.12,\yQfour);
\node[lab, left, yshift=5pt] at (-0.15,\yQfour) {$Q_4$};
\draw[hp] (0,\yQfour) -- (\Xone,\yQfour);

% Continuous curve
\draw[curve]
  (0,\yQzero)
  -- (\Xa,\yQone)
  -- (\Xb,\yQtwo)
  -- (\Xc,\yQthree)
  -- (\Xone,\yQfour);

\end{scope}
\end{tikzpicture}
    \caption{Marginal effects on the Shapley value for player $j$ with other incoming links.}
    \label{fig:shapley_j}
\end{figure}
 
\item In Figure~\ref{fig:k_2shapley} the effect of the player's $i$ weight $a_{ij}$ to Shapley value of player $k_2$ who is also in-neighbor of player $j$ is shown.

\begin{figure}[ht]
    \centering
\begin{tikzpicture}[
  >=Latex,
  axis/.style={->, line width=0.9pt},
  bp/.style={dash pattern=on 3pt off 3pt, line width=0.9pt},
  hp/.style={dash pattern=on 3pt off 3pt, line width=0.8pt},
  tick/.style={line width=0.8pt},
  curve/.style={line width=1.6pt, red!70},
  lab/.style={font=\small}
]

\def\Xscale{6}
\def\S{20}

% breakpoints
\def\xa{0.2}
\def\xb{0.5}
\def\xc{0.8}

\pgfmathsetmacro{\Xa}{\Xscale*\xa}
\pgfmathsetmacro{\Xb}{\Xscale*\xb}
\pgfmathsetmacro{\Xc}{\Xscale*\xc}
\pgfmathsetmacro{\Xone}{\Xscale*1.0}

% Heights
\pgfmathsetmacro{\yTzero}{0}
\pgfmathsetmacro{\yTone}{\S*((1/12)*(\xc-\xb))}                 
\pgfmathsetmacro{\yTtwo}{\S*((1/12)*(\xc-\xb) + (1/20)*(1-\xc))}

\pgfmathsetmacro{\Ytop}{\yTtwo + 0.9}

\begin{scope}[shift={(2.6,0)}]

% Axes
\draw[axis] (0,0) -- (6.2,0) node[lab, right] {$a_{ij}$};
\draw[axis] (0,0) -- (0,\Ytop) node[lab, above, text=red!70] {$\phi_{k_2}$};

% x ticks
\draw[tick] (0,0) -- (0,-0.12);
\node[lab, below] at (0,-0.14) {$0$};

\draw[tick] (\Xone,0) -- (\Xone,-0.12);
\node[lab, below] at (\Xone,-0.14) {$1$};

\draw[tick] (\Xa,0) -- (\Xa,-0.12);
\node[lab, below=7pt] at (\Xa,-0.14) {$a_{k_1 j}=0.2$};

\draw[tick] (\Xb,0) -- (\Xb,-0.12);
\node[lab, below=7pt] at (\Xb,-0.14) {$a_{k_2 j}=0.5$};

\draw[tick] (\Xc,0) -- (\Xc,-0.12);
\node[lab, below=7pt] at (\Xc,-0.14) {$a_{k_3 j}=0.8$};

% dashed vertical guides
\draw[bp] (\Xa,0) -- (\Xa,\Ytop);
\draw[bp] (\Xb,0) -- (\Xb,\Ytop);
\draw[bp] (\Xc,0) -- (\Xc,\Ytop);

% y levels

\draw[tick] (0,\yTzero) -- (-0.14,\yTzero);
\node[lab, left] at (-0.18,\yTzero) {$T_0$};
\draw[hp] (0,\yTzero) -- (\Xone,\yTzero);

\draw[tick] (0,\yTone) -- (-0.14,\yTone);
\node[lab, left] at (-0.18,\yTone) {$T_1$};
\draw[hp] (0,\yTone) -- (\Xone,\yTone);

% T2 (label shifted slightly upward only)
\draw[tick] (0,\yTtwo) -- (-0.14,\yTtwo);
\node[lab, left, yshift=4pt] at (-0.18,\yTtwo) {$T_2$};
\draw[hp] (0,\yTtwo) -- (\Xone,\yTtwo);

% Curve
\draw[curve] (0,\yTzero) -- (\Xb,\yTzero);
\draw[curve] (\Xb,\yTzero) -- (\Xc,\yTone);
\draw[curve] (\Xc,\yTone) -- (\Xone,\yTtwo);

\end{scope}
\end{tikzpicture}
    \caption{Marginal effects on the Shapley value for player $k_2$ with other incoming links.}
    \label{fig:k_2shapley}
\end{figure}
\end{enumerate}

From Figure~\ref{fig:marg_graph_i} we may conjecture the following, the player $i$ cannot obtain a greater Shapley value $\phi_i$ by changing weight  $a_{ij}$ if there are no other in-neighbors of player $j$. We now present formally this observation as a characterization of null players in a trust game.
\begin{lemma}[Null player in trust game]
    Let $(N,v)$ be a trust game. Player $i$ is a null player ($\phi_i(v)=0$) if and only if \begin{equation}
        \forall j \in N^-_i:a_{ji}=0 \And \forall k:i\in N_k^-: (|N^-_k|=1) \vee (\forall m\in N_k^-:a_{mk}=0) 
        \end{equation}
\end{lemma}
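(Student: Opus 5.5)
Since the trust game is monotone (Bandhana et al.~\cite{bandhana_trust_2024}), every marginal contribution $v(S\cup\{i\})-v(S)$ is nonnegative. The Shapley value is a positively weighted average of these contributions, so $\phi_i(v)=0$ holds if and only if $v(S\cup\{i\})=v(S)$ for every $S\subseteq N\setminus\{i\}$. I will therefore work directly with marginal contributions rather than with the unanimity decomposition. The decomposition is awkward here because supports $T_i(t)$ and $T_k(s)$ may coincide, for instance for mutual edges, and their coefficients could then interact.

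\textbf{Key computation.} Fix $S\not\ni i$. Adding $i$ changes only three kinds of terms.
\begin{itemize}
\item The internal edges between $i$ and $S$ contribute $\sum_{j\in S,(j,i)\in E}a_{ji}+\sum_{k\in S,(i,k)\in E}a_{ik}$.
\item The external term $w_i(S\cup\{i\})$ appears, equal to $\min_{j\notin S\cup\{i\},(j,i)\in E}a_{ji}$.
\item For each out-neighbour $k\in S$ of $i$, the external term changes by $w_k(S\cup\{i\})-w_k(S)$.
\end{itemize}
For $k\in S$ with $(i,k)\in E$ we have $w_k(S)\le a_{ik}$, because $i\notin S$ is an in-neighbour of $k$. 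Hence each combined term $a_{ik}+w_k(S\cup\{i\})-w_k(S)\ge 0$. So the marginal contribution is a sum of nonnegative pieces, and it vanishes exactly when every piece vanishes.

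\textbf{Sufficiency.} Assume the stated condition. Then all $a_{ji}=0$, so the incoming internal terms and $w_i$ vanish. Take an out-neighbour $k\in S$. If $N_k^-=\{i\}$, then $w_k(S)=a_{ik}$ and $w_k(S\cup\{i\})=0$, so the piece $a_{ik}+0-a_{ik}$ is zero. If all weights into $k$ are zero, then every quantity in the piece is zero. Hence every marginal contribution is $0$.

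\textbf{Necessity.} This is the step I expect to require the most care, namely choosing witness coalitions.
\begin{itemize}
\item First take $S=N\setminus\{i\}$. Then $w_i(N)=0$, and for each out-neighbour $k$ we get $w_k(S)=a_{ik}$ and $w_k(N)=0$. The marginal contribution therefore reduces to $\sum_{j\in N^-_i}a_{ji}$, which forces $a_{ji}=0$ for all $j\in N_i^-$.
\item Next suppose some out-neighbour $k$ of $i$ has $|N_k^-|\ge 2$ and some positive incoming weight. Choose $m\in N_k^-\setminus\{i\}$ with maximal $a_{mk}$, and set $S=\{k\}\cup(N_k^-\setminus\{i,m\})$. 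Then $w_k(S)=\min(a_{ik},a_{mk})$ and $w_k(S\cup\{i\})=a_{mk}$.
\item The $k$-piece thus equals $a_{ik}+a_{mk}-\min(a_{ik},a_{mk})=\max(a_{ik},a_{mk})$. This is positive, since the largest incoming weight to $k$ is either $a_{ik}$ or $a_{mk}$.
\item All other pieces are nonnegative, so the marginal contribution is positive and $\phi_i(v)>0$, a contradiction.
\end{itemize}
This establishes both directions.
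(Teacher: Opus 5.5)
Your proof is correct, and it supplies something the paper lacks: the paper states this lemma without proof, as the formalization of an observation read off from Figure~\ref{fig:marg_graph_i}. The implicit route there is to inspect the closed-form Shapley formula of the corollary. That route is less direct than it looks, because the formula contains negative terms such as $-b_j(m_j)/(m_j+1)$. One would have to regroup $\phi_i(v)$ as $\tfrac12\sum_{j\in N_i^-}a_{ji}+\phi_i(w_i)+\sum_{k:(i,k)\in E}\bigl(\tfrac12 a_{ik}+\phi_i(w_k)\bigr)$, show that each group is nonnegative, and characterize equality. For instance, $\tfrac12 a_{ik}+\phi_i(w_k)=0$ if and only if $m_k=1$ or $b_k(m_k)=0$, and this needs an estimate such as $\sum_{t=r+1}^{m_k}\frac{b_k(t)-b_k(t-1)}{t}\ge\frac{b_k(m_k)-b_k(r)}{m_k}$ with $r=r_k(i)$. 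One would also first have to repair the misprints in the printed formula: the upper limit $m_i$ where $m_j$ is meant, and the internal part restricted to mutual edges. With the latter, a player $i$ with no in-neighbours whose only edge is $i\to k$, with $N_k^-=\{i\}$, would receive $-a_{ik}/2$ instead of $0$. Your argument sidesteps all of this. Monotonicity turns $\phi_i(v)=0$ into the vanishing of every marginal contribution. The bound $w_k(S)\le a_{ik}$ splits $v(S\cup\{i\})-v(S)$ into nonnegative pieces, so sufficiency is immediate. The two witness coalitions $N\setminus\{i\}$ and $\{k\}\cup(N_k^-\setminus\{i,m\})$ then give necessity without any reference to rankings or ties. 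As a by-product, nothing specific to the Shapley weights is used. The same characterization therefore holds verbatim for the Banzhaf value, and indeed for any semivalue with strictly positive weights. It also shows that the paper's definition $\phi_i(v)=0$ coincides here with the usual marginal-contribution notion of a null player. What the formula route would offer in exchange is quantitative information, namely how large $\phi_i(v)$ is when the condition fails.
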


\begin{comment}

\begin{lemma}
    TODO:WHICH kind of monotonicity works. Does not work somethinth only concerning the links incoming to two different players.
\end{lemma}

\begin{lemma}
Graphs of change for the shapley value     
\end{lemma}
TODO :What are the incentives for the players in such a case....
\begin{lemma}
    dddd
\end{lemma}
\end{comment}

\begin{comment}
\subsection*{Almost Impartial shapley }
waht if we want the shapley to be almost impartial, such that the player may slightly change his local values and thus change his shapley value but if has no or limited information about local values of others he may not be able to do it at all. 
Idea 

\begin{defn}[One player ranking impartiality]
    Let $\varphi_i\leq \varphi_j$. Suppose $(i,j)\notin E$. By adding $a_{ij}$ the relative order of $\varphi_i$ and $\varphi_j$ cannot be changed.
\end{defn}
NOTE : how it is in dynamics?
\end{comment}

\section{Banzhaf value}\label{sec:banz}
We compute the Banzhaf value of the trust game. We use the Möbius transform of the trust game to derive the result. 
Let us calculate the Banzhaf value for the coalitions in unanimity game decompostion:
\begin{lemma}[Banzhaf value of a unanimity game]\label{lemma:banzhaf_unanimity}
Let $N$ be the player set with $|N|=n$, and let $u_T$ be the unanimity game
\[
u_T(S)=
\begin{cases}
1, & \text{if } T\subseteq S,\\
0, & \text{otherwise,}
\end{cases}
\qquad S\subseteq N.
\]
Then for every player $k\in N$, the  Banzhaf value satisfies
\[
\beta_k(u_T)=
\begin{cases}
2^{\,1-|T|}, & \text{if } k\in T,\\
0, & \text{if } k\notin T.
\end{cases}
\]
\end{lemma}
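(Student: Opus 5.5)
I will compute the Banzhaf value directly from its definition,
\[
\beta_k(v)=\frac{1}{2^{\,n-1}}\sum_{S\subseteq N\setminus\{k\}}\bigl(v(S\cup\{k\})-v(S)\bigr),
\]
applied to $v=u_T$. The argument splits into the two cases of the statement according to whether $k\in T$.

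\textbf{Case $k\notin T$.} For every $S\subseteq N\setminus\{k\}$ we have $T\subseteq S\cup\{k\}$ if and only if $T\subseteq S$, since $k\notin T$. Hence $u_T(S\cup\{k\})=u_T(S)$, every marginal contribution vanishes, and $\beta_k(u_T)=0$. In other words, $k$ is a null player of $u_T$.

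\textbf{Case $k\in T$.} For $S\subseteq N\setminus\{k\}$ we have $u_T(S)=0$, because $k\notin S$ while $k\in T$. The other term satisfies $u_T(S\cup\{k\})=1$ exactly when $T\setminus\{k\}\subseteq S$. So the marginal contribution is $1$ precisely for those $S\subseteq N\setminus\{k\}$ that contain the fixed set $T\setminus\{k\}$, and $0$ otherwise. Such an $S$ is determined by an arbitrary subset of the remaining $(n-1)-(|T|-1)=n-|T|$ players, so there are $2^{\,n-|T|}$ of them. Dividing by $2^{\,n-1}$ gives
\[
\beta_k(u_T)=2^{\,n-|T|}/2^{\,n-1}=2^{\,1-|T|}.
\]

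I do not expect any real obstacle here. The only points to handle carefully are the correct normalisation $2^{\,n-1}$, i.e.\ that the sum runs over coalitions not containing $k$, and the count of supersets of $T\setminus\{k\}$ inside $N\setminus\{k\}$. The edge case $T=\{k\}$ is covered by the same count and gives $\beta_k=1$, consistent with the formula.
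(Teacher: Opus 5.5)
Your proposal is correct and follows the paper's proof essentially step for step: you compute the normalised sum of marginal contributions over $S\subseteq N\setminus\{k\}$, observe that $k$ is a null player when $k\notin T$, and, when $k\in T$, count the $2^{\,n-|T|}$ coalitions containing $T\setminus\{k\}$. You also state explicitly that $u_T(S)=0$ for $S\not\ni k$, which makes the ``marginal contribution is $1$ iff $T\setminus\{k\}\subseteq S$'' step slightly more transparent than in the paper.
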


\begin{proof}
Fix $k\in N$. By definition,
\[
\beta_k(u_T)=\frac{1}{2^{n-1}}\sum_{S\subseteq N\setminus\{k\}}
\bigl(u_T(S\cup\{k\})-u_T(S)\bigr).
\]
We analyze when the increment $u_T(S\cup\{k\})-u_T(S)$ equals $1$.

\emph{Case 1: $k\notin T$.}
 For every $S\subseteq N\setminus\{k\}$,

$u_T(S\cup\{k\})=u_T(S)$,
 thus $\beta_k(u_T)=0$.
\\

\emph{Case 2: $k\in T$.}

 For $k\in T$,
\[
u_T(S\cup\{k\})-u_T(S)=1
\quad\text{if and only if}\quad
T\setminus\{k\}\subseteq S.
\]
So the sum counts the number of coalitions $S\subseteq N\setminus\{k\}$ that contain $T\setminus\{k\}$.

Let us count such coalitions $S$ for a given coalition $T$. Once $T\setminus\{k\}$ is included, the remaining elements of $S$ can be chosen arbitrarily from
\[
(N\setminus\{k\})\setminus (T\setminus\{k\}) = N\setminus T,
\]
which has size $n-|T|$. Hence the number of admissible $S$ is $2^{n-|T|}$.

Therefore,
\[
\beta_k(u_T)
= \frac{1}{2^{n-1}}\cdot 2^{n-|T|}
=2^{1-|T|}.
\]
\end{proof}

\begin{theorem}
    The Banzhaf value of player $i$ in trust game $(N,v)$ is:
    \begin{align*}
 \beta_{i}(v)
=&\sum_{\substack{j\in N \\ (i,j)\in E,(j,i)\in E}}\dfrac{a_{ij}+a_{ji}}{2}   \\
+& \sum_{t=1}^{m_i} \frac{b_i(t)-b_i(t-1)}{2^{t-1}}
  - \frac{b_i(m_i)}{2^{m_i}}+\sum_{j\neq i:(i,j)\in E}\ \sum_{t=r_j+1}^{m_i} \frac{b_j(t)-b_j(t-1)}{2^{t-1}}
  - \frac{b_j(m_j)}{2^{m_j}}.
\end{align*}
\end{theorem}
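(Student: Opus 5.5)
The plan mirrors the derivation of the Shapley value. We start from the unanimity decomposition of the trust game in~\eqref{eq:decomposition} and use that the Banzhaf value $\beta$, like $\phi$, is a linear operator on games. This gives
\[
\beta_i(v)=\sum_{(k,j)\in E} a_{kj}\,\beta_i(u_{\{k,j\}})+\sum_{k\in N}\beta_i(w_k),
\]
where each $w_k$ is expanded via Lemma~\ref{lemma:ext_decomp}. Every summand is then $\beta_i(u_T)$ for some $T$. By Lemma~\ref{lemma:banzhaf_unanimity}, this equals $2^{1-|T|}$ if $i\in T$ and $0$ otherwise. The proof therefore reduces to bookkeeping: for which $T$ in the decomposition is $i\in T$, and what is $|T|$.

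\emph{Internal part.} Each term $a_{kj}u_{\{k,j\}}$ has $|T|=2$, so it contributes $a_{kj}/2$ to player $i$ exactly when $i\in\{k,j\}$. Summing over edges incident to $i$ yields the first line of the formula, with weight $\tfrac12$ just as in the Shapley case. Here I would follow the paper's convention for grouping the terms $a_{ij}+a_{ji}$ over neighbours $j$. Since $2^{1-2}=\tfrac12=\tfrac1{2}$, this part coincides with the Shapley internal term, which explains why the first line is unchanged.

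\emph{External part, own term.} We have $i\in T_i(t)$ for all $t$, and $|T_i(t)|=t$. Hence $\beta_i(u_{T_i(t)})=2^{1-t}=1/2^{t-1}$, and the correction term has $|T_i(m_i+1)|=m_i+1$, giving $1/2^{m_i}$. This produces $\sum_{t=1}^{m_i}(b_i(t)-b_i(t-1))/2^{t-1}-b_i(m_i)/2^{m_i}$.

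\emph{External part, other players.} For $k\neq i$, player $i$ belongs to some $T_k(t)$ only if $i$ is an in-neighbour of $k$, i.e.\ $(i,k)\in E$. In that case, with $r_k=r_k(i)$, we have $i=k_{r_k}\in T_k(t)$ exactly when $t-1\ge r_k$, i.e.\ $t\ge r_k+1$. Each such set has size $t$, giving weight $1/2^{t-1}$. The set $T_k(m_k+1)$ always contains $i$ and gives $-b_k(m_k)/2^{m_k}$. Renaming $k$ as $j$ yields the final double sum. If $(i,k)\notin E$, all terms vanish, as in the Shapley proof.

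The only delicate step is this index bookkeeping for other players: the lower limit $r_j+1$ and the upper limit. The correct upper limit of the inner sum is $m_j$, not $m_i$; the statement inherits the $m_i$ from the Shapley corollary, and I would read it as $m_j$. Ties in the ordering are handled by fixing the chosen order $i_1,\dots,i_{m_i}$, so $r_j$ is well defined. Once these indices are fixed, the rest is substitution into Lemma~\ref{lemma:banzhaf_unanimity}.
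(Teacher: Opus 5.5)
Your proposal is correct and is essentially the paper's proof: apply linearity of $\beta$ to the unanimity decomposition and replace the Shapley weight $1/|T|$ by $2^{1-|T|}$ from Lemma~\ref{lemma:banzhaf_unanimity}, and you are right that the inner upper limit should be $m_j$ rather than $m_i$. Your internal computation gives $\tfrac12$ times the total weight of all edges incident to $i$, which matches the first line only if its index condition is read as $(i,j)\in E$ \emph{or} $(j,i)\in E$ (with $a_{kl}:=0$ for $(k,l)\notin E$), so state this explicitly rather than deferring to a ``convention'', because the literal ``and'' would drop non-reciprocated edges.
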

\begin{proof}
   We can follow the proof in the same manner as for the Shapley value. The only change in the proof is the Banzhaf value of the unanimity game which is calculated in Lemma~\ref{lemma:banzhaf_unanimity}
$$ \beta_i(u_T(S))=\frac{1}{2^{|T|-1}}.$$
   We can now simplify the formula such that 
      \begin{align}
 \beta_{i}(v)
=&\sum_{\substack{j\in N \\ (i,j)\in E,(j,i)\in E}}\dfrac{a_{ij}+a_{ji}}{2}   \\
+& \sum_{t=1}^{m_i} \frac{b_i(t)}{2^{t}}+\sum_{j\neq i:(i,j)\in E}\ \sum_{t=r_j+1}^{m_i} \frac{b_j(t)}{2^{t}}
  - \frac{a_{ij}}{2^{r_j}}.
\end{align}
\end{proof}

We present marginal effects of weights to the Banzhaf values analogously to the Shapley value.

\begin{lemma}

Marginal effects of the local value $a_{kj}$ to player's $i$ Banzhaf value:
\begin{itemize}
    \item If $(j,i)\in E$ then marginal effect of $a_{ji}$ is $\dfrac{1}{2}a_{ji}$ from internal game and $\dfrac{1}{2^{r_i(j)}}a_{ji}$ from the external game,
    \item If $(i,j)\in E$ then marginal effect of $a_{ij}$ is $\dfrac{1}{2}a_{ij}$ from internal game and $-\dfrac{1}{2^{r_j(i)}}a_{ij}$ from the external game,
    \item If $(i,j)\in E$ and $(k,j)\in E$ and $r_j(i)<r_j(k)$  then marginal effect of $a_{kj}$ is 0 from internal game and $\dfrac{1}{2^{r_j(k)}}a_{kj}$ from the external game.
\end{itemize}
\end{lemma}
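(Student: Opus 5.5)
We use linearity of $\beta$ together with the unanimity decomposition of the trust game, and read off the coefficient of each weight. The game is linear in the weights: $v$ is the internal game $\sum a_{ij}u_{\{i,j\}}$ plus $\sum_{l} w_l$. By the unanimity decomposition of the trust game and Lemma~\ref{lemma:banzhaf_unanimity},
\begin{equation*}
\beta_i(v)=\sum_{(i,j)\in E}\tfrac12 a_{ij}+\sum_{(j,i)\in E}\tfrac12 a_{ji}+\sum_{l\in N}\Bigl(\sum_{t=1}^{m_l}\frac{b_l(t)-b_l(t-1)}{2^{t-1}}\,[i\in T_l(t)]-\frac{b_l(m_l)}{2^{m_l}}[i\in T_l(m_l+1)]\Bigr).
\end{equation*}
As long as $\varepsilon$ does not change any in-neighbour ranking, the sets $T_l(t)$ do not move. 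Hence $\beta_i(v_\varepsilon)-\beta_i(v)$ is $\varepsilon$ times the coefficient of the perturbed weight in this expression. The stated ``marginal effect $c\,a_{kj}$'' should be read as this coefficient $c$, i.e.\ the rate of change per unit of $a_{kj}$. The plan is therefore to compute, for a fixed edge $(k,j)$ with $b_j(s)=a_{kj}$ and $s=r_j(k)$, the coefficient of $b_j(s)$ in $\beta_i$.

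\textbf{Internal part.} The weight $a_{kj}$ enters only through $u_{\{k,j\}}$, whose Banzhaf value is $2^{1-2}=\tfrac12$ for both $k$ and $j$ and $0$ for everyone else. This gives $\tfrac12$ in the first two cases, where $i\in\{k,j\}$, and $0$ in the third case, where $i\notin\{k,j\}$.

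\textbf{External part.} The weight $b_j(s)$ appears only in $w_j$. It appears there in the increments $(b_j(s)-b_j(s-1))u_{T_j(s)}$ and $-(b_j(s+1)-b_j(s))u_{T_j(s+1)}$. When $s=m_j$, the second of these is replaced by the correction term $-b_j(m_j)u_{T_j(m_j+1)}$, which has the same form. So the coefficient of $b_j(s)$ in $\beta_i(w_j)$ is
\begin{equation*}
2^{1-s}[i\in T_j(s)]-2^{-s}[i\in T_j(s+1)].
\end{equation*}
We evaluate this case by case.
\begin{itemize}
\item For $i=j$ (the incoming edge $a_{ji}$ with $s=r_i(j)$): $i$ lies in both sets, so the coefficient is $2^{1-s}-2^{-s}=2^{-s}$.
\item For $i=k$ (the outgoing edge $a_{ij}$): $i=j_s$ belongs to $T_j(s+1)$ but not to $T_j(s)$, so the coefficient is $-2^{-s}$ with $s=r_j(i)$.
\item For a third player $i$ with $r_j(i)<r_j(k)=s$: $i$ belongs to both $T_j(s)$ and $T_j(s+1)$, so the coefficient is again $2^{-s}$.
\end{itemize}
This matches the three bullets. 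The Shapley version is the same computation with $1/s-1/(s+1)=1/(s(s+1))$ in place of $2^{1-s}-2^{-s}$, which serves as a sanity check.

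\textbf{Main obstacle.} The delicate step is the bookkeeping at the boundaries $s=1$ and $s=m_j$. For $s=1$, we need $b_j(0)=0$ to be constant. For $s=m_j$, we need the correction term $-b_j(m_j)u_{T_j(m_j+1)}$ to play the role of the missing next increment. Both are handled by the convention $b_j(0)=0$ and by $T_j(m_j+1)$ containing all in-neighbours. Ties in the weights would also need care, but the definition of marginal effect fixes the ranking, so they cause no problem.
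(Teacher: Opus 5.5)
Your proposal is correct and takes the same approach as the paper. The paper's one-line proof reads off the coefficient of $a_{kj}$ in the Banzhaf formula built from the unanimity decomposition, and your extraction of the coefficient $2^{1-s}[i\in T_j(s)]-2^{-s}[i\in T_j(s+1)]$ is exactly that step made explicit. Working directly from the decomposition also means you do not inherit the typos in the displayed Banzhaf theorem: the upper limit $m_i$ where $m_j$ is meant, and the internal sum restricted to reciprocated edges.
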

\begin{proof}
    We can calculate the marginal effects by looking at the coefficient at a given weight $a_{kj}$ in the player's $i$ formula for the Banzhaf value. 
\end{proof}

\begin{comment}
\begin{lemma}
    Let $(N,v)$ be a trust game. Banzhaf and Shapley value of the game $(N,v)$ are not ordinally equivalent. 
\end{lemma}
\begin{proof}
   TODO: Counterexample or remove lemma and comments on the ordinal equivalency
\end{proof}
\end{comment}

\section{Core of The Trust Game}\label{sec:core}
We now turn to the properties of the trust game related to its stability under coalitional deviations. We first show that the core of the trust game is always a singleton. Moreover, this allocation does not, in general, coincide with either the Shapley value or the Banzhaf value.
\begin{lemma}[Core of trust game]\label{lemma:core}
Let $(N,v)$ be a trust game. The core of $(N,v)$ consists of the single allocation
\[
\left(\sum_{\substack{j\in N \\ (j,1)\in E}} a_{j1}, \sum_{\substack{j\in N \\ (j,2)\in E}} a_{j2}, \ldots, \sum_{\substack{j\in N \\ (j,n)\in E}} a_{jn}\right).
\]
\end{lemma}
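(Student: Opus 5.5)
The plan is to pin down the core from two sides. First I show that every core allocation must equal the vector $x^\ast$ with $x^\ast_k := \sum_{j:(j,k)\in E} a_{jk}$, the total incoming weight of $k$. Then I check that $x^\ast$ itself satisfies all core inequalities. Throughout I use that weights lie in $[0,1]$, hence are nonnegative, and that there are no self-loops.

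\emph{Step 1: efficiency.} For $S=N$ we have $N^\ast=\emptyset$, since no player lies outside $N$. So the external part vanishes and $v(N)=\sum_{(i,j)\in E}a_{ij}$. Grouping the edges by their heads gives $v(N)=\sum_{k\in N}x^\ast_k$.

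\emph{Step 2: the coalitions $N\setminus\{k\}$.} This is the key computation. The internal part of $v(N\setminus\{k\})$ is $v(N)$ minus all edges incident to $k$, that is, minus both $\sum_{j}a_{jk}$ and $\sum_{i}a_{ki}$. For the external part, the only outsider is $k$. Hence $(N\setminus\{k\})^\ast$ is exactly the set of out-neighbours $i$ of $k$, and for each such $i$ the minimum over outside in-neighbours equals $a_{ki}$. The external part therefore adds back $\sum_i a_{ki}$, and
\[
v(N\setminus\{k\}) = v(N) - x^\ast_k .
\]
Now take any core allocation $x$. From $x(N)=v(N)$ and $x(N\setminus\{k\})\ge v(N\setminus\{k\})$ we get $x_k\le x^\ast_k$ for every $k$. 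Summing over $k$ gives $x(N)\le \sum_k x^\ast_k = v(N) = x(N)$, so every inequality is tight and $x=x^\ast$. Thus the core is contained in $\{x^\ast\}$.

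\emph{Step 3: $x^\ast$ is in the core.} Efficiency holds by Step 1. For an arbitrary $S$, split the incoming weight of each $i\in S$ according to whether the tail lies in $S$ or outside:
\[
x^\ast(S)=\sum_{\substack{i,j\in S\\(j,i)\in E}}a_{ji}+\sum_{i\in S}\ \sum_{\substack{j\notin S\\ (j,i)\in E}} a_{ji}.
\]
The first sum is exactly the internal term of $v(S)$. For $i\in S^\ast$, the inner second sum is a nonempty sum of nonnegative numbers, so it is at least their minimum $\min_{j\notin S:(j,i)\in E}a_{ji}$. For $i\in S\setminus S^\ast$ the inner sum is empty and contributes $0$, matching the convention. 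Hence $x^\ast(S)\ge v(S)$, and $x^\ast$ lies in the core.

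The main obstacle is the bookkeeping in Step 2, which is what forces uniqueness. One must verify that removing $k$ deletes both its in- and out-edges internally, yet the out-edges reappear exactly as external minima. This uses that $k$ is the unique outsider, so each minimum is over a single edge. Step 3 needs only nonnegativity of the weights.
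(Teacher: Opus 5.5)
Your proof is correct and follows the paper's uniqueness argument. You bound $x_k \le v(N)-v(N\setminus\{k\})$ using the coalition $N\setminus\{k\}$ and efficiency, then sum to force equality; the paper packages that summation as the identity $\sum_{i} v(N\setminus\{i\})=(n-1)v(N)$, which is exactly your formula $v(N\setminus\{k\})=v(N)-x^\ast_k$ summed over $k$. The one difference is that you prove $x^\ast$ is in the core directly, by the ``sum $\ge$ min'' comparison for each $i\in S^\ast$, whereas the paper only cites Bandhana et al.\ for this, so your version is self-contained.
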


\begin{proof}
The fact that the allocation 
\[
\left(\sum_{\substack{j\in N \\ (j,1)\in E}} a_{j1}, \sum_{\substack{j\in N \\ (j,2)\in E}} a_{j2}, \ldots, \sum_{\substack{j\in N \\ (j,n)\in E}} a_{jn}\right).
\]
belongs to the core follows directly from the structure of the trust game and is also established by Bandhana et al. in~\cite{bandhana_trust_2024}. 
We first establish the following identity:
\begin{equation}\label{eq:core}
    \sum_{i\in N} \frac{1}{n-1} v(N\setminus \{i\}) = v(N).
\end{equation}
    We can rewrite from the definition of the game \begin{equation*}
        v(N\setminus i)=\sum_{k \in N\setminus i}\sum_{\substack{l \in N\setminus i \\ (k,l)\in E}} a_{kl}+\sum_{\substack{k \in N\setminus i\\ (i,k)\in E}}a_{ik }.
    \end{equation*} From this we can see that \begin{equation*}
        v(N\setminus i)=\sum_{k \in N}\sum_{\substack{l \in N\\ (k,l)\in E}} a_{kl}-\sum_{\substack{k \in N\setminus i\\ (k,i)\in E}}a_{ki }.
    \end{equation*}
For each $i \in N$, the value $v(N \setminus \{i\})$ equals the sum of all edge weights in the graph except for those corresponding to edges entering $i$. Hence, in \eqref{eq:core}, each edge weight is counted exactly $n-1$ times on the left-hand side. Multiplication by the factor $\frac{1}{n-1}$ therefore yields the total sum of all edge weights in the graph, which coincides with $v(N)$.
%We now show that the unique core allocation is given by 
%\[
%x_i = v(N) - v(N \setminus \{i\}).
%\]
Let $x \in C(v)$ and define
\[
c_i := v(N) - v(N \setminus \{i\}).
\]
By coalitional rationality applied to the coalition $N \setminus \{i\}$, we have
\[
x(N \setminus \{i\}) \ge v(N \setminus \{i\}),
\]
which, together with efficiency $x(N)=v(N)$, implies
\[
x_i = v(N) - x(N \setminus \{i\}) \le v(N) - v(N \setminus \{i\}) = c_i.
\]
Hence $x_i \le c_i$ for all $i \in N$.
Summing over all players and using \eqref{eq:core}, we obtain
\begin{equation}
\sum_{i \in N} c_i
= \sum_{i \in N} \big( v(N) - v(N \setminus \{i\}) \big)
= n\, v(N) - \sum_{i \in N} v(N \setminus \{i\})
= v(N).
\end{equation}
Since $x \in C(v)$, efficiency implies
\[
\sum_{i \in N} x_i = v(N) = \sum_{i \in N} c_i.
\]
Consequently,
\[
\sum_{i \in N} (c_i - x_i) = 0.
\]
As $c_i - x_i \ge 0$ for all $i \in N$, the above equality can hold only if
\[
c_i = x_i \quad \text{for all } i \in N.
\]
Therefore, the core consists of a single allocation.
\end{proof}

Next, we discuss total balancedness of the trust game. 
\begin{lemma}[Total balancedness]
 Let $(N,v)$ be a trust game, where $\forall \ i,j\in N: a_{ij}\geq 0$. The trust game is totally balanced.
\end{lemma}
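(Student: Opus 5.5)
To show that the trust game is totally balanced, we have to show that for every nonempty $S\subseteq N$ the subgame $(S,v|_S)$ has a nonempty core. Here $v|_S(T)=v(T)$ for $T\subseteq S$, so the external terms of $T$ may still involve players outside $S$. I will not reuse Lemma~\ref{lemma:core} directly. Instead I will write down an explicit candidate allocation for each subgame, generalising the core allocation of Lemma~\ref{lemma:core}, and check efficiency and coalitional rationality by hand.

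For fixed $S$ I propose the allocation
\[
x^S_i:=\sum_{\substack{j\in S\\ (j,i)\in E}} a_{ji}+w_i(S),\qquad i\in S,
\]
where $w_i$ is the external game of player $i$. Each player receives all trust entering it from inside $S$, plus its bottleneck external term with respect to $S$. For $S=N$ we have $w_i(N)=0$, and this reduces to the allocation of Lemma~\ref{lemma:core}.

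\emph{Efficiency.} Summing $x^S_i$ over $i\in S$ gives $\sum_{i,j\in S,(j,i)\in E}a_{ji}+\sum_{i\in S}w_i(S)$. This is exactly the internal part plus the external part of $v(S)$, because $w_i(S)=0$ whenever $i\notin S^*$. Hence $x^S(S)=v(S)$.

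\emph{Coalitional rationality.} Fix $T\subseteq S$. Writing $v(T)$ in the same per-player form, it suffices to prove for each $i\in T$ that
\[
\sum_{\substack{j\in S\setminus T\\ (j,i)\in E}} a_{ji}+w_i(S)\;\ge\; w_i(T).
\]
Summing this over $i\in T$ then gives $x^S(T)\ge v(T)$, since the internal edges within $T$ appear on both sides. I will split into two cases.

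\begin{itemize}
\item If $i$ has some in-neighbour $j\in S\setminus T$, then $j\notin T$, so $w_i(T)\le a_{ji}$. The left-hand side is at least $a_{ji}$, because all weights are nonnegative and so are the remaining summands and $w_i(S)$.
\item If $i$ has no in-neighbour in $S\setminus T$, then the in-neighbours of $i$ outside $T$ are exactly those outside $S$. Hence $w_i(T)=w_i(S)$, and the inequality holds with equality.
\end{itemize}

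Hence $x^S$ lies in the core of every subgame, and the game is totally balanced. The main point needing care is the first case. It is the only place where the hypothesis $a_{ij}\ge 0$ is used: a single in-edge from $S\setminus T$ must dominate the minimum, and the other summands must not be negative. A smaller point is the convention for an empty minimum, $w_i=0$: in the second case the in-neighbour sets outside $T$ and outside $S$ coincide, so both sides are either defined by the same minimum or both equal $0$.
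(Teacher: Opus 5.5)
Your proof is correct and follows the paper's argument. You use the same allocation (in-weights from inside $S$ plus the bottleneck term $w_i(S)$), the same efficiency check, and the same reduction of coalitional rationality: the in-weights from $S\setminus T$ plus the external terms relative to $S$ must dominate the external terms relative to $T$.

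The only difference is organizational. You check this inequality player by player, while the paper splits into three cases on the coalition $T$ ($T\cap S^*\neq\emptyset$, $T\cap S^*=\emptyset\neq T^*$, $T^*=\emptyset$). Your split is actually tidier: in a coalition with $T\cap S^*\neq\emptyset$, it automatically handles the players in $T^*\setminus S^*$, which the paper's first case as written does not address.
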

\begin{proof}
We show that for every coalition $S \subseteq N$, the restricted game $(S,v_S)$, defined by 
\[
v_S(T) = v(T), \qquad T \subseteq S,
\]
has a nonempty core. By the Bondareva–Shapley theorem idependently proved by Shapley~\cite{shapley_balanced_1967} and Bondareva~\cite{bondareva_applications_1963}, it is sufficient to establish total balancedness of the trust game.

We proceed by constructing an explicit core allocation for each $(S,v_S)$. 
For $S = N$, the claim follows from Lemma~\ref{lemma:core}. 
Let $S \subset N$. Define the payoff vector $\overline{x} \in \mathbb{R}^S$ by
\[
\overline{x}_j 
= \sum_{\substack{i \in S \\ (i,j)\in E}} a_{ij}
  + \min_{\substack{k \notin S \\ (k,j)\in E}} a_{kj},
\qquad j \in S.
\]

We first verify efficiency. By definition of the game,
\[
v_S(S)
= \sum_{i \in S} \sum_{\substack{j \in S\\ (i,j)\in E}} a_{ij}
  + \sum_{i \in S^*} \min_{\substack{k \notin S \\ (k,i)\in E}} a_{ki},
\]
where
\[
S^* = \{\, i \in S \mid \exists\, k \notin S \text{ such that } (k,i)\in E \,\}.
\]
Rewriting the above expression yields
\[
v_S(S) = \sum_{j \in S} \overline{x}_j,
\]
so $\overline{x}$ is efficient.

    Now let us compare $v_S(T)$ and the $\sum_{j\in T}\overline{x}_j $. At first we establish that if the following inequality holds \begin{equation}\label{eq:min_monotonicity}
        \min_{i\notin S}a_{ij}\geq \min_{i\notin T}a_{ij}, \ T\subseteq S \ \text{if} \ \exists (i,j)\in E: i\notin S.
    \end{equation}
    Now we show coalitional rationality for $(S,v_S)$.  Let $T\subset S$, then
    \begin{equation*}
        \sum_{j\in T}\overline{x}_j= \sum_{i\in S}\sum_{\substack{j \in T\\ (i,j)\in E}}a_{ij}+\sum_{j\in S^*\cap T}\min_{i\notin S}a_{ij}=\sum_{i\in T}\sum_{\substack{j \in T\\ (i,j)\in E}}a_{ij}+\sum_{i\in S\setminus T}\sum_{\substack{j \in S\\ (i,j)\in E}}a_{ij}+\sum_{j\in S^*\cap T}\min_{i\notin S}a_{ij}.
    \end{equation*}
    Since 
    \begin{equation}\label{eq:inequal_tb}
        v(T)=\sum_{i\in T}\sum_{\substack{j \in T\\ (i,j)\in E}}a_{ij}+\sum_{j\in T^*}\min_{i\notin T}a_{ij},
    \end{equation}
    we can reduce the proof to show only the following inequality
    \begin{equation*}
       \sum_{i\in S\setminus T}\sum_{\substack{j \in T\\ (i,j)\in E}}a_{ij}+\sum_{j\in S^*\cap T}\min_{i\notin S}a_{ij}\geq \sum_{j\in T^*}\min_{i\notin T}a_{ij}
    \end{equation*}

The proof proceeds by considering three cases according to the existence of incoming edges to players in $T$.

\begin{enumerate}
    \item $T\cap S^*\neq \emptyset$ (i.e., there exists $j\in T$ with an incoming edge from $N\setminus S$).

    In this case, the second term on the left-hand side is present. For each such $j\in T\cap S^*$, inequality~\eqref{eq:min_monotonicity} implies
    \[
    \min_{\substack{k\notin S\\(k,j)\in E}} a_{kj}
    \;\ge\;
    \min_{\substack{k\notin T\\(k,j)\in E}} a_{kj}.
    \]
    The remaining terms on the left-hand side are nonnegative by $a_{ij}\ge 0$, hence the desired inequality follows.

    \item $T\cap S^*=\emptyset$ but $T^*\neq \emptyset$ (i.e., $T$ has incoming edges from $S\setminus T$, but none from $N\setminus S$).

    Then the second term on the left-hand side vanishes, and the inequality reduces to showing that the cross-term
    \[
    \sum_{i\in S\setminus T}\sum_{\substack{j\in T\\(i,j)\in E}} a_{ij}
    \]
    dominates the external term of $v(T)$:
    \[
    \sum_{j\in T^*}\min_{\substack{k\notin T\\(k,j)\in E}} a_{kj}.
    \]
    For each $j\in T^*$, all incoming neighbors of $j$ outside $T$ lie in $S\setminus T$ by assumption, hence
    \[
    \min_{\substack{k\notin T\\(k,j)\in E}} a_{kj}
    =
    \min_{\substack{k\in S\setminus T\\(k,j)\in E}} a_{kj}
    \le
    \sum_{\substack{i\in S\setminus T\\(i,j)\in E}} a_{ij},
    \]
    where the last inequality uses nonnegativity. Summing over $j\in T^*$ yields the claim.

    \item $T^*=\emptyset$ (i.e., $T$ has no incoming edges from $N\setminus T$).

    In this case, the external term on the right-hand side is zero, and the inequality holds because all terms on the left-hand side are nonnegative.
\end{enumerate}

\end{proof}

\begin{comment}
\subsection{What to do next}
Try to establish vector value. Informativness or equlibria in the 
What are pros and cons of the coop game theory approach. Whtat about the nonmonotonicity of external game, should not be the grand coalition value be a one.
Fit to conference like audience.
\end{comment}

\section{Conclusion and future directions}\label{sec:conclusion}

In this paper we introduced a new class of cooperative games induced by weighted directed graphs. The coalitional value combines two components: an additive internal interaction term and an external exposure term based on minimal incoming evaluations from outside the coalition.

Despite this non-additive structure, the game admits an unanimity decomposition with simple structure. This representation enables the derivation of polynomial-time computable formulas for both the Shapley and the Banzhaf values. In particular, the external component generates a nested chain of unanimity games determined by the ranking of incoming neighbors, which makes the computation of Shapley and Banzhaf values tractable.

From the perspective of cooperative game theory, we established several structural properties. The game has a nonempty core, which in fact reduces to a unique allocation equal to the vector of total incoming weights. Moreover, the class is totally balanced, implying that every subgame admits a nonempty core and is stable against coalitional deviations. Notably, the unique core allocation differs from the Shapley value, which highlights the distinction between stability-based and fairness-based solution concepts in this setting.

The class therefore provides an tractable example of a game in which Shapley value, stability properties, and sensitivity to local perturbations can all be characterized explicitly.

\paragraph{Future Research Directions}

Several natural research directions follow from the present analysis.

A first line of research concerns parametric generalizations of the coalitional value. 
Instead of the sum of the internal and external components, one may consider convex combinations of the components.
It would be of interest to determine how structural properties such as total balancedness or uniqueness of the core  depend on the parameter $\alpha$.

A second direction concerns alternative specifications of the external component. 
Replacing the minimum operator by other aggregation rules, for instance, average, median, or more general order statistics of incoming links may produce different classes of games. 
A systematic investigation of which external operators preserve tractability of power indices and stability properties remains an open problem.

A third direction involves strategic considerations and mechanism design. Since local edge weights determine global allocations, individual players or coalitions may attempt to manipulate reported evaluations. Designing aggregation rules that guarantee incentive compatibility or impartiality, in the sense that no agent can influence their own score, while preserving fairness properties such as the Shapley value represents a concrete and largely open problem.

\section*{Acknowledgements}
This work was supported by the Czech Science Foundation (GAČR) under Grant No.~\texttt{25-17221S}.
% ---------- Bibliography ----------
% If you use BibTeX:
\bibliographystyle{plain} % swap to alpha / apalike / unsrt / etc.
\bibliography{references}

% If you use biblatex instead (often nicer control), remove the above and use:
% \usepackage[backend=biber,style=authoryear]{biblatex}
% \addbibresource{references.bib}
% ...
% \printbibliography

\begin{comment}
\appendix
\section{Additional Proofs}
Optional appendix.
\end{comment}

\end{document}